\documentclass{article}

\usepackage{spconf}

\usepackage[T1]{fontenc}
\usepackage{amsmath,amsfonts,amssymb,amsthm,mathtools,accents}
\usepackage{algorithm,algorithmic}

\usepackage{float}

\usepackage[inline]{enumitem}

\usepackage{booktabs}


\usepackage{hyperref}
\hypersetup{
  colorlinks = true,
  allcolors = blue,
}
\usepackage[capitalise,nameinlink]{cleveref}
\usepackage{cite}
\usepackage{doi}

\crefformat{app}{#2Appendix #1#3}
\Crefformat{app}{#2Appendix #1#3}


\usepackage[x11names]{xcolor}

\newcounter{todo}

\makeatletter

\newcommand\listoftodos{\section*{To-do List}\@starttoc{tod}}
\makeatother


\usepackage{tikz}
\usetikzlibrary{matrix,arrows,arrows.meta,
shapes,positioning,
fit,calc,
decorations.pathmorphing,
intersections,
patterns,patterns.meta}

\usepackage{pgfplots}
\pgfplotsset{compat=1.17}
\usepgfplotslibrary{groupplots}
\usepgfplotslibrary{fillbetween}


\newtheorem{theorem}{Theorem}

\newtheorem*{example*}{Example}
\newtheorem*{theorem*}{Theorem}

\theoremstyle{definition}

\theoremstyle{remark}







\newcommand{\vect}[1]{\ensuremath{\mathbf{#1}}} 

\renewcommand{\Vert}[1]{\ensuremath{V_{#1}}}
\newcommand{\Edge}[1]{\ensuremath{E_{#1}}}


\newcommand{\pb}[1]{\ensuremath{#1^{*}}} 
\newcommand{\pf}[1]{\ensuremath{#1_{*}}} 

\newcommand{\gra}[1]{\ensuremath{\mathcal{\uppercase{#1}}}} 
\newcommand{\set}[1]{\ensuremath{\mathcal{\uppercase{#1}}}} 
\newcommand{\bundle}[4]{\ensuremath{\gra{#1}\to\gra{#2}\overset{#4}{\to}\gra{#3}}} 


\newcommand{\ie}{\textit{i.e.}}


\pgfplotsset{colormap={coolwarm}{[1pt]
    rgb(0pt)=(0.2298057, 0.298717966, 0.753683153);
    rgb(1pt)=(0.2389484589019608, 0.3123654946588235, 0.7656759021764705);
    rgb(2pt)=(0.2526625972549019, 0.3328367876470588, 0.7836650259411765);
    rgb(3pt)=(0.26180535615686273, 0.3464843163058824, 0.795657775117647);
    rgb(4pt)=(0.27582712294117645, 0.36671691552941177, 0.812552935372549);
    rgb(5pt)=(0.28527277752941177, 0.38012942263529415, 0.8234685512470589);
    rgb(6pt)=(0.2994412594117647, 0.40024818329411765, 0.8398419750588235);
    rgb(7pt)=(0.30906031906666664, 0.41349827226666663, 0.8501276338666667);
    rgb(8pt)=(0.32371841525490197, 0.4331584405490196, 0.864722355372549);
    rgb(9pt)=(0.3383765114431373, 0.45281860883137254, 0.8793170768784313);
    rgb(10pt)=(0.34832334141176474, 0.4657111465098039, 0.8883461629411764);
    rgb(11pt)=(0.3634607953411765, 0.4847836818509804, 0.9010188868941177);
    rgb(12pt)=(0.37355243129411764, 0.4974987054117647, 0.9094673695294118);
    rgb(13pt)=(0.38885187195294113, 0.5162984355764706, 0.9213734830823529);
    rgb(14pt)=(0.39923148431372546, 0.5285284721568628, 0.9284591027843138);
    rgb(15pt)=(0.41480090285490195, 0.5468735270274511, 0.939087532337255);
    rgb(16pt)=(0.42519897019607844, 0.559058179764706, 0.9460614570784314);
    rgb(17pt)=(0.4411227243607843, 0.5765318648470589, 0.9545453433843137);
    rgb(18pt)=(0.4570464785254902, 0.5940055499294118, 0.963029229690196);
    rgb(19pt)=(0.46767809468235294, 0.6055912316235293, 0.9685462810941176);
    rgb(20pt)=(0.48385432959999997, 0.6220498496, 0.9748082026);
    rgb(21pt)=(0.49463848621176465, 0.6330222615843136, 0.9789828169372549);
    rgb(22pt)=(0.5108243242509803, 0.6493966148235294, 0.9850787763764707);
    rgb(23pt)=(0.5216962808313725, 0.6595986063529412, 0.9877360232470589);
    rgb(24pt)=(0.5380042157019607, 0.6749015936470587, 0.9917218935529412);
    rgb(25pt)=(0.5543118699137254, 0.6900970112156862, 0.9955155482352941);
    rgb(26pt)=(0.5651815812235294, 0.6994384449411764, 0.9966350701176471);
    rgb(27pt)=(0.5814861481882353, 0.7134505955294117, 0.9983143529411764);
    rgb(28pt)=(0.5923558594980393, 0.7227920292549019, 0.9994338748235294);
    rgb(29pt)=(0.6085473603411764, 0.7357252298235294, 0.9993538252980392);
    rgb(30pt)=(0.6193179451882354, 0.7441207347647059, 0.9989309188196078);
    rgb(31pt)=(0.6354738224588236, 0.7567139921764706, 0.9982965591019608);
    rgb(32pt)=(0.6461128107647058, 0.7644364965294117, 0.9968684625058823);
    rgb(33pt)=(0.6619678959411764, 0.7754914668823529, 0.9939365253764706);
    rgb(34pt)=(0.677822981117647, 0.786546437235294, 0.9910045882470588);
    rgb(35pt)=(0.6881884831921569, 0.7931783792980391, 0.9880381043568628);
    rgb(36pt)=(0.7035868880862746, 0.8025856365215686, 0.9828471328745098);
    rgb(37pt)=(0.7138524913490196, 0.8088571413372548, 0.9793864852196078);
    rgb(38pt)=(0.7289695795686274, 0.8174641357058824, 0.973187668372549);
    rgb(39pt)=(0.7388259949411764, 0.8225716218235294, 0.9682610638235294);
    rgb(40pt)=(0.753610618, 0.830232851, 0.960871157);
    rgb(41pt)=(0.7633627801019607, 0.8350922218196078, 0.9556576765568627);
    rgb(42pt)=(0.777377532854902, 0.8409212149490196, 0.9461493015921568);
    rgb(43pt)=(0.7913922856078431, 0.8467502080784314, 0.9366409266274509);
    rgb(44pt)=(0.8006008472941177, 0.8503583215607843, 0.9300075603921568);
    rgb(45pt)=(0.8136925818823529, 0.8542818385490196, 0.9184801025098039);
    rgb(46pt)=(0.8224204049411765, 0.8568975165411765, 0.9107951305882354);
    rgb(47pt)=(0.8353447113529412, 0.8605139972941176, 0.8989704099411765);
    rgb(48pt)=(0.8433581741921568, 0.8618196540156863, 0.8900171168901961);
    rgb(49pt)=(0.8553783684509804, 0.8637781390980391, 0.8765871773137255);
    rgb(50pt)=(0.8674276350862745, 0.864376599772549, 0.8626024620196079);
    rgb(51pt)=(0.8755573874313726, 0.860242158862745, 0.8514300660980393);
    rgb(52pt)=(0.8877520159490196, 0.8540404974980391, 0.8346714722156863);
    rgb(53pt)=(0.8958817682941177, 0.8499060565882353, 0.8234990762941177);
    rgb(54pt)=(0.9061541340352941, 0.8420910651764706, 0.8061505930823529);
    rgb(55pt)=(0.9127650614705882, 0.8366818943529412, 0.7945121117647058);
    rgb(56pt)=(0.9226814526235294, 0.8285681381176471, 0.7770543897882353);
    rgb(57pt)=(0.9281160096666666, 0.8221971488627451, 0.765141349254902);
    rgb(58pt)=(0.9357737696666666, 0.8122367012392158, 0.7471564735843139);
    rgb(59pt)=(0.9434315296666667, 0.8022762536156862, 0.7291715979137255);
    rgb(60pt)=(0.9473454036, 0.7946955048, 0.7169905058);
    rgb(61pt)=(0.9527607176705882, 0.7829647976, 0.6986457713058823);
    rgb(62pt)=(0.9563709270509804, 0.7751443261333334, 0.6864159483098039);
    rgb(63pt)=(0.9605811984235294, 0.7625010185254902, 0.6679635471019607);
    rgb(64pt)=(0.9627082783294117, 0.7535573465568628, 0.655601211227451);
    rgb(65pt)=(0.9658988981882353, 0.7401418386039216, 0.6370577074156862);
    rgb(66pt)=(0.9675442976352941, 0.7308497161882352, 0.6246854782352941);
    rgb(67pt)=(0.9685329496823529, 0.7158412919058823, 0.6060967478823529);
    rgb(68pt)=(0.9695216017294117, 0.7008328676235294, 0.5875080175294117);
    rgb(69pt)=(0.9696829796666666, 0.6904839307372549, 0.5751383613647059);
    rgb(70pt)=(0.9684997476666667, 0.673977379772549, 0.5566492560470588);
    rgb(71pt)=(0.9677109263333333, 0.6629730124627451, 0.5443231858352942);
    rgb(72pt)=(0.9660167198392157, 0.6461297415882352, 0.5258903482588235);
    rgb(73pt)=(0.963806056435294, 0.6341884145294118, 0.5137208491529413);
    rgb(74pt)=(0.9604900613294117, 0.6162764239411764, 0.49546660049411767);
    rgb(75pt)=(0.9566532109764706, 0.598033822717647, 0.4773022923529412);
    rgb(76pt)=(0.9530536002470588, 0.5852108672980392, 0.465372634627451);
    rgb(77pt)=(0.9476541841529411, 0.5659764341686274, 0.4474781480392157);
    rgb(78pt)=(0.9440545734235294, 0.5531534787490197, 0.4355484903137255);
    rgb(79pt)=(0.9367796132117647, 0.5327495001098039, 0.41809333948627453);
    rgb(80pt)=(0.9318312966, 0.5190855232, 0.4064796086);
    rgb(81pt)=(0.9244088216823529, 0.49858955783529413, 0.38905901227058826);
    rgb(82pt)=(0.9182816725843137, 0.48417347218039214, 0.37779392507058823);
    rgb(83pt)=(0.908908026654902, 0.46243263716862765, 0.36095039415294133);
    rgb(84pt)=(0.8995343807254902, 0.4406918021568627, 0.34410686323529416);
    rgb(85pt)=(0.8921375427882353, 0.4253887370980392, 0.33328927276078435);
    rgb(86pt)=(0.8808963866470588, 0.4023312782745098, 0.3171151874901961);
    rgb(87pt)=(0.8734022825529412, 0.3869596390588235, 0.3063324639764706);
    rgb(88pt)=(0.8610536002941176, 0.3629157635294118, 0.2906281271764706);
    rgb(89pt)=(0.8523781350078431, 0.34649194649411763, 0.2803464686980392);
    rgb(90pt)=(0.8393649370784314, 0.32185622094117644, 0.26492398098039216);
    rgb(91pt)=(0.8301865219490197, 0.30473276355294115, 0.25489142806666665);
    rgb(92pt)=(0.8155083866078432, 0.2777809871764706, 0.24029356566666665);
    rgb(93pt)=(0.8008302512666666, 0.2508292108, 0.22569570326666666);
    rgb(94pt)=(0.7905615319411765, 0.23139699905882352, 0.21624203829411764);
    rgb(95pt)=(0.7743368501529412, 0.19975926804705882, 0.2025345544352941);
    rgb(96pt)=(0.763520395627451, 0.17866744737254903, 0.1933962318627451);
    rgb(97pt)=(0.7468380122117647, 0.14002101948235293, 0.17999609695686275);
    rgb(98pt)=(0.7350766252941177, 0.10445963105882351, 0.17149230125490195);
    rgb(99pt)=(0.717434544917647, 0.05111754842352939, 0.15873660770196077);
    rgb(100pt)=(0.705673158, 0.01555616, 0.150232812)
  }}

\usepackage{blindtext}

\ninept

\title{Windowed Fourier Analysis for Signal Processing on Graph Bundles}
\name{%
T.~Mitchell~Roddenberry, Santiago~Segarra
\thanks{This work was supported by USA NSF under award CCF-2008555.}
}

\address{Rice University, Dept. of Electrical and Computer Engineering, Houston, TX, USA}

\begin{document}
\maketitle
\begin{abstract}
  We consider the task of representing signals supported on graph bundles, which are generalizations of product graphs that allow for ``twists'' in the product structure.
  Leveraging the localized product structure of a graph bundle, we demonstrate how a suitable partition of unity over the base graph can be used to lift the signal on the graph into a space where a product factorization can be readily applied.
  Motivated by the locality of this procedure, we demonstrate that bases for the signal spaces of the components of the graph bundle can be lifted in the same way, yielding a basis for the signal space of the total graph.
  We demonstrate this construction on synthetic graphs, as well as with an analysis of the energy landscape of conformational manifolds in stereochemistry.
\end{abstract}
\begin{keywords}
  Graph signal processing, Graph Fourier transform, Fiber bundle, Graph bundle
\end{keywords}
\section{Introduction}

In signal processing and machine learning, a key aspect of many methods is the selection of a proper coordinate system with which to represent a dataset.
Preprocessing steps such as PCA, for instance, represent a dataset in coordinates determined by its principal components, with the hypothesis that only a few coordinates will dominate the rest.
Fourier representations of signal sacrifice spatial or temporal locality in order to make each coordinate (frequency) carry information about the entire signal in question -- wavelets interpolate between the spatial and spectral locality of standard and Fourier representations.
In graph signal processing, we are interested in representing and processing graph signals in ways that reflect the underlying graph geometry.

One such problem in graph signal processing arises when processing signals on product graphs, where factoring the vertex set as the Cartesian product of the vertex set of factor graphs yields a multidimensional graph Fourier transform~\cite{Ortiz2018,Varma2018,Stanley2020}, analogous to the multidimensional Fourier transform in Euclidean space having frequency axes corresponding to each coordinate.
In this case, the Fourier modes of the Cartesian product graph are given by the tensor product of the Fourier modes on the factor graphs.
However, these product factorizations do not always hold.
For instance, a graph may only factor as a product of two factors \emph{locally} about each node, but not globally.
When the product factorization only holds locally, the tensor product factorization of the Fourier modes does not necessarily hold, as it is dependent on the product factorization of the graph holding \emph{globally}.
This is most classically exemplified in the M\"{o}bius graph (\cref{fig:mobius}), where the factors of the graph are approximately given by a cycle graph and a path graph, but there is a twist in the global structure that obstructs such a factorization.
To overcome this obstruction for the task of signal representation, this necessitates the use of a \emph{localized} factored coordinate system about each node.

\subsection{Contributions.}
In this work, we develop tools for representing signals on graphs that locally factor as product graphs.
In particular:
\begin{enumerate}
\item We introduce graph bundles as objects describing graphs that locally factor as product graphs.
\item With the local product structure of graph bundles in mind, we illustrate how any bases for the local factors can be used to construct a localized basis for signals on the graph bundle.
\item We illustrate the utility of this on synthetic graph bundles, as well as an application in analyzing energy landscapes in stereochemistry.
\end{enumerate}

\begin{figure}
  \centering
  \resizebox{0.7\linewidth}{!}{\begin{tikzpicture}
  \begin{groupplot}[
    group style={group size=2 by 1,
      group name=myplots,
      horizontal sep=-0.5cm,
    },
    samples=100,
    samples y=0,
    ]
    
    \nextgroupplot[
        view={-25}{45},
        zmin=-0.2, zmax=0.2,
        hide axis,
    ]
    \foreach \yval in {-0.1,0.1} {
      \addplot3
      [domain=0:360]
      (%
      {(1.0+(\yval)*cos(x/2))*cos(x)},
      {(1.0+(\yval)*cos(x/2))*sin(x)},
      {(\yval)*sin(x/2)}
      );
    }

    \foreach \theta in {0,72,144,216,288} {
      \addplot3
      [mark=square*,mark size=4pt,color=blue]
      coordinates
      {%
        ({(1.0+0.1*cos(\theta/2))*cos(\theta)},{(1.0+0.1*cos(\theta/2))*sin(\theta)},{0.1*sin(\theta/2)})
        ({(1.0-0.1*cos(\theta/2))*cos(\theta)},{(1.0-0.1*cos(\theta/2))*sin(\theta)},{-0.1*sin(\theta/2)})
      };
    }

    \nextgroupplot[
        view={-25}{45},
        zmin=-0.2, zmax=0.2,
        hide axis,
    ]
    \foreach \yval in {-0.1,0.1} {
      \addplot3
      [domain=0:360]
      (%
      {cos(x)},
      {sin(x)},
      {\yval}
      );
    }

    \foreach \theta in {0,72,144,216,288} {
      \addplot3
      [mark=*,mark size=4pt,color=red]
      coordinates
      {%
        ({cos(\theta)},{sin(\theta)},{-0.1})
        ({cos(\theta)},{sin(\theta)},{0.1})
      };
    }
    


    
    
  \end{groupplot}

\end{tikzpicture}}
  
  \vspace{0.5cm}
  
  \resizebox{0.7\linewidth}{!}{\begin{tikzpicture}

  \begin{axis}[
    samples=100,
    samples y=0,
    legend pos=south east,
    xlabel={$j$},
    ylabel={$\lambda_j$},
    label style={font=\Large},
    ticklabel style={font=\large},
    legend style={font=\Large},
    width=12cm,
    height=6cm,
    ]
    \addplot[mark=square*, color=blue] table[x=idx, y=lambda_M] {figs/mobius-spectra.dat};
    \addlegendentry{M\"{o}bius}
    
    \addplot[mark=*, color=red] table[x=idx, y=lambda_C] {figs/mobius-spectra.dat};
    \addlegendentry{Cylinder}
    
  \end{axis}

\end{tikzpicture}}
  
  \vspace{-0.25cm}
  
  \caption{A M\"{o}bius graph (top left), the cylinder graph given by the product of the base and the fiber (top right), and the spectra of their Laplacians (bottom).}
  \label{fig:mobius}
\end{figure}
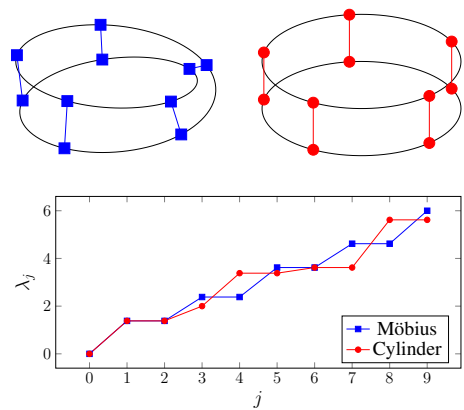

\section{Graph Bundles and Signal Representation}\label{sec:background}

\subsection{Graphs and graph signals.}
A \emph{graph} $\gra{G}$ consists of a finite set of \emph{vertices}, denoted $\Vert{\gra{G}}$, and a set of \emph{edges}, denoted $\Edge{\gra{G}}$, such that every edge is an unordered tuple of vertices.
We find it convenient to treat a graph as a set $\gra{G}=\Vert{\gra{G}}\cup\Edge{\gra{G}}$.
For a given set of vertices $S\subseteq\Vert{\gra{G}}$, the \emph{induced subgraph} on $S$ is the graph whose vertex set is equal to $S$, and whose edge set is comprised of those edges in $\gra{G}$ that span elements of $S$, denoted $\gra{G}[S]$.
For a vertex $v\in\Vert{\gra{G}}$, the \emph{neighborhood graph} about $v$ is the graph $\gra{N}_v$ such that $\Vert{\gra{N}_v}$ consists of $v$ and all nodes $u$ such that $(u,v)\in\Edge{\gra{G}}$ and $\Edge{\gra{N}_v}$ consists of all pairs $(u,v)$.
That is, the neighborhood graph is a star graph centered at $v$.

For graph $\gra{G},\gra{H}$, a \emph{graph map} is a function $\phi:\gra{H}\to\gra{G}$
such that $\phi(\Vert{\gra{H}})\subseteq\Vert{\gra{G}}$,
and for any $(i,j)\in\Edge{\gra{H}}$,
it holds that $\phi((i, j))\in\{\phi(i),\phi(j),(\phi(i), \phi(j))\}$.
We will use the term ``map'' to mean graph map, in general.
For graphs $\gra{G},\gra{H}$ and an injective map $\phi:\gra{H}\to\gra{G}$,
we say that $\phi$ is a \emph{local isomorphism} if for any pair $i,j\in\Vert{\gra{H}}$,
it holds that $(\phi(i),\phi(j))\in\Edge{\gra{G}}$ if and only if $(i,j)\in\Edge{\gra{H}}$.
For two graphs $\gra{B},\gra{F}$, the \emph{(Cartesian) product graph}
is the graph $\gra{G}=\gra{B}\square\gra{F}$
whose vertex set is the Cartesian product $\Vert{\gra{B}}\times\Vert{\gra{F}}$,
and whose edge set is such that $((i_1,i_2),(j_1,j_2))\in\Edge{\gra{G}}$
if and only if $i_1=j_1$ and $(i_2,j_2)\in\Edge{\gra{F}}$,
or $(i_1,j_1)\in\Edge{\gra{B}}$ and $i_2=j_2$~\cite{Godsil2001}.

For a graph $\gra{G}$, a \emph{graph signal} on $\gra{G}$ is a function $\vect{x}:\Vert{\gra{G}}\to\mathbb{R}$.
We denote the set of all such functions by $\mathbb{X}(\gra{G})$,
and endow it with the usual Hilbert space structure via identification with $\mathbb{R}^{\Vert{\gra{G}}}$.
Moreover, we find it useful to define a product of graph signals $(\cdot):\mathbb{X}(\gra{G})\times\mathbb{X}(\gra{G})\to\mathbb{X}(\gra{G})$,
which is evaluated by taking the pointwise product at each vertex of the graph.
For graph signals taking strictly nonnegative values, this yields a well-defined notion of the square-root of a graph signal.

When graphs are related to each other via maps, we can use those maps to relate their signal spaces as well.
Let $\gra{G},\gra{H}$ be graphs, with a map $\pi:\gra{G}\to\gra{H}$ and an injective map $\phi:\gra{H}\to\gra{G}$.
For a signal $\vect{x}\in\mathbb{X}(\gra{H})$, the \emph{pullback} of $\vect{x}$ by the map $\pi$ is a signal on $\gra{G}$ given by precomposition of $\vect{x}$ with $\pi$, denoted by $\pb{\pi}\vect{x}:=\vect{x}\circ\pi$.
Similarly, the \emph{pushforward} of $\vect{x}$ by the injective map $\phi$ is given by precomposition of $\vect{x}$ with $\phi^{-1}$, denoted by $\pf{\phi}\vect{x}:=\vect{x}\circ\phi^{-1}$.
For those vertices $v\in\Vert{\gra{G}}$ not contained in the image of $\phi$, by convention we say that $(\pf{\phi}\vect{x})(v)=0$.

\subsection{Graph bundles.}
\begin{figure}
  \centering
  \resizebox{\linewidth}{!}{\begin{tikzpicture}
  \begin{groupplot}[
    group style={group size=3 by 1,
      group name=myplots,
      horizontal sep=-0.5cm,
    },
    samples=100,
    samples y=0,
    ]

    \nextgroupplot[
    xmin=0, xmax=1,
    ymin=0, ymax=1,
    hide axis,
    ]
    \addplot[color=red, domain=0.2:0.8] ({x}, {x});
    \addplot[mark=*, mark size=4pt, color=red] coordinates {(0.2, 0.2)};
    \addplot[mark=*, mark size=4pt, color=red] coordinates {(0.8, 0.8)};
    
    \nextgroupplot[
    view={-15}{45},
    zmin=-0.2, zmax=0.2,
    hide axis,
    ]
    \foreach \yval in {-0.1,0.1} {
      \addplot3[color=black,domain=0:360]
      (%
      {(1.0+(\yval)*cos(x/2))*cos(x)},
      {(1.0+(\yval)*cos(x/2))*sin(x)},
      {(\yval)*sin(x/2)}
      );
    }

    \foreach \theta in {0,72,144,216,288} {
      \addplot3[color=black]
      coordinates
      {%
        ({(1.0+0.1*cos(\theta/2))*cos(\theta)},{(1.0+0.1*cos(\theta/2))*sin(\theta)},{0.1*sin(\theta/2)})
        ({(1.0-0.1*cos(\theta/2))*cos(\theta)},{(1.0-0.1*cos(\theta/2))*sin(\theta)},{-0.1*sin(\theta/2)})
      };
    }

    \addplot3[mark=*,mark size=4pt,color=black] coordinates
    { ({(1.0+0.1*cos(0/2))*cos(0)},{(1.0+0.1*cos(0/2))*sin(0)},{0.1*sin(0/2)}) ({(1.0-0.1*cos(0/2))*cos(0)},{(1.0-0.1*cos(0/2))*sin(0)},{-0.1*sin(0/2)}) };
    \addplot3[mark=square,mark size=4pt,color=black] coordinates
    { ({(1.0+0.1*cos(72/2))*cos(72)},{(1.0+0.1*cos(72/2))*sin(72)},{0.1*sin(72/2)}) ({(1.0-0.1*cos(72/2))*cos(72)},{(1.0-0.1*cos(72/2))*sin(72)},{-0.1*sin(72/2)}) };
    \addplot3[mark=triangle*,mark size=5pt,color=black] coordinates
    { ({(1.0+0.1*cos(144/2))*cos(144)},{(1.0+0.1*cos(144/2))*sin(144)},{0.1*sin(144/2)}) ({(1.0-0.1*cos(144/2))*cos(144)},{(1.0-0.1*cos(144/2))*sin(144)},{-0.1*sin(144/2)}) };
    \addplot3[mark=o,mark size=4pt,color=black] coordinates
    { ({(1.0+0.1*cos(216/2))*cos(216)},{(1.0+0.1*cos(216/2))*sin(216)},{0.1*sin(216/2)}) ({(1.0-0.1*cos(216/2))*cos(216)},{(1.0-0.1*cos(216/2))*sin(216)},{-0.1*sin(216/2)}) };
    \addplot3[mark=square*,mark size=4pt,color=black] coordinates
    { ({(1.0+0.1*cos(288/2))*cos(288)},{(1.0+0.1*cos(288/2))*sin(288)},{0.1*sin(288/2)}) ({(1.0-0.1*cos(288/2))*cos(288)},{(1.0-0.1*cos(288/2))*sin(288)},{-0.1*sin(288/2)}) };
    
    \nextgroupplot[hide axis]
    \addplot[color=blue, domain=0:360] ({cos(\x)}, {sin(\x)});
    \addplot[mark=*,mark size=4pt,color=blue] coordinates {({cos(0)},{sin(0)})};
    \addplot[mark=square,mark size=4pt,color=blue] coordinates {({cos(72)},{sin(72)})};
    \addplot[mark=triangle*,mark size=5pt,color=blue] coordinates {({cos(144)},{sin(144)})};
    \addplot[mark=o,mark size=4pt,color=blue] coordinates {({cos(216)},{sin(216)})};
    \addplot[mark=square*,mark size=4pt,color=blue] coordinates {({cos(288)},{sin(288)})};

    


    
    
  \end{groupplot}

\end{tikzpicture}}
  \caption{%
    A graph bundle $\textcolor{red}{\gra{F}}\to\gra{G}\overset{\pi}{\to}\textcolor{blue}{\gra{B}}$.
    The projection map $\pi$ is indicated by the shapes of the vertices in the total and base graphs.
  }
  \label{fig:bundle}
\end{figure}
A graph bundle is a simpler construct than a general fiber bundle~\cite{Hatcher2005}.
Let $\gra{F},\gra{G},\gra{B}$ be finite graphs
with a surjective map $\pi:\gra{G}\to\gra{B}$
such that for all $v\in\Vert{\gra{B}}$,
there is a local isomorphism
$\phi_v:\gra{N}_v\square\gra{F}\to\gra{G}$
with the condition that $p_1=\pi\circ\phi$,
where $p_1$ is the projection map onto the first factor.
Under these conditions, we call the object $\bundle{F}{G}{B}{\pi}$ a \emph{graph bundle}.

This is exemplified in \cref{fig:bundle}, once again by the M\"{o}bius graph.
The fiber is a path graph of length $2$, and the base graph is a cycle on $5$ vertices.
The projection from the total graph to the base graph ``squashes'' the fibers each down to a single vertex in the base, thus yielding a cycle graph.

In our example, observe that for any strict subgraph $U\subseteq\gra{B}$, the preimage $\pi^{-1}(U)$ admits a local isomorphism $\phi_U:\pi^{-1}(U)\to U\square\gra{F}$, as desired.
For a graph bundle, there always exists at least one \emph{trivializing cover} of the base graph, which is a set of subgraphs $\set{U}$ of the base graph where
\begin{enumerate*}[label=(\roman*)]
\item $\gra{B}=\cup_{U\in\set{U}}U$, and
\item for each $U\in\set{U}$, there is a local isomorphism
$\phi_U:U\square\gra{F}\to\gra{G}$
with the condition that $p_1=\pi\circ\phi_U$.
\end{enumerate*}
One such trivializing cover is given by the set of all neighborhood graphs in the base, but others can be constructed as well.

A graph bundle naturally generalizes the notion of a (Cartesian) product graph.
For a base graph $\gra{B}$ and fiber $\gra{F}$, let $\gra{G}=\gra{B}\square\gra{F}$ be the Cartesian product graph.
Letting $\pi:\gra{G}\to\gra{B}$ be the standard projection map onto the first coordinate of a product,
one can check that $\bundle{F}{G}{B}{\pi}$ is a graph bundle.
The conditions for a graph bundle imply that for any $v\in\Vert{\gra{B}}$, it holds that $\pi^{-1}(v)\cong\gra{F}$.
We call $\gra{G}$ the \emph{total graph}, $\pi$ the \emph{projection map}, $\gra{B}$ the \emph{base graph}, and $\gra{F}$ the \emph{fiber}.

The advantage of graph bundles is the allowance for ``twists'' in the product structure.
The presence of twists is what distinguishes graph bundles as more general objects than product graphs.
Indeed, a graph bundle with a tree for a base graph factors as a product of the base and the fiber, since there are no loops to twist around over a tree.

\subsection{Spectral graph signal representations.}
Many signal processing and machine learning methods hinge on the choice of a proper representation for the data.
There are a variety of options for graph signal representation put forth in the literature.
For a graph $\gra{G}$, the standard basis for $\mathbb{X}(\gra{G})$ represents a graph signal $\vect{x}$ as a weighted sum of unit impulse functions at each node.
The Fourier basis represents a graph signal $\vect{x}$ as a weighted sum of eigenvectors of a suitably defined graph matrix, such as the Laplacian or any of its normalized variants; this is commonly known as the graph Fourier transform or GFT~\cite{Shuman2013}.
Both the standard and Fourier bases are dictionaries of graph signals that constitute orthonormal bases for the Hilbert space $\mathbb{X}(\gra{G})$.
The standard basis can be though of as being spatially localized and spectrally delocalized, while the Fourier basis is spectrally localized and spatially delocalized.

\section{The Bundle Transform}\label{sec:theory}

\subsection{Local coordinate systems.}
Recent works have considered the processing of signals on \emph{product graphs}~\cite{Ortiz2018,Varma2018,Stanley2020}, which are special cases of graph bundles.
The underlying components of a product graph are used to form a coordinate system, which then allows one to differentiate more interesting spectral features of graph signals using the GFT.
This is analogous to the Fourier transform for images.
Rather than carrying out Fourier analysis isotropically, the Fourier transform of a signal on a grid is understood as a function of two frequencies: one for the horizontal direction, and one for the vertical direction.
For product graphs $\gra{G}=\gra{G}_1\square\gra{G}_2$, the first coordinate can often be thought of as ``space'' and the second as ``time,'' (particularly if $\gra{G}_2$ is a path graph).
Taking the GFT in each coordinate separately allows one to decouple these features, in order to understand the relationship between spatial and temporal Fourier modes.

For a graph bundle, this decoupling is only approximately valid.
A coordinate representation of a graph bundle not only needs a coordinate on the base and a coordinate on the fiber, but also a way to map that coordinate into the total graph via a local isomorphism $\phi_U$.
That is to say, graph bundles indeed have a coordinate system that factors into a base and a fiber, but only when localized to a particular set in the cover of the base graph.

As an example of the GFT's failure to factor over graph bundles, we return to the M\"{o}bius graph in \cref{fig:mobius}, whose base graph is a cycle on five nodes and fiber is a path on two nodes.
If the M\"{o}bius graph could be identified as a product of the base graph and the fiber, then the spectrum of the graph Laplacian would be given by the convolution of the spectra of the base and the fiber~\cite{Merris1994}.
And indeed, applying techniques from~\cite{Roddenberry2022b,Rey2022}, one can check that the $k^{\mathrm{th}}$ moments of the spectra of the product of the base and the fiber and the M\"{o}bius graph are identical up to $k=2$: however, their higher-order moments differ, yielding distinct spectral structures.
This is a result of the graph spectrum being a \emph{global} descriptor of the graph structure, while the property of ``factoring as a product graph'' only holds \emph{locally} for graph bundles.

\subsection{The Bundle Transform.}
\begin{figure}
  \centering
  \includegraphics[width=0.8\linewidth]{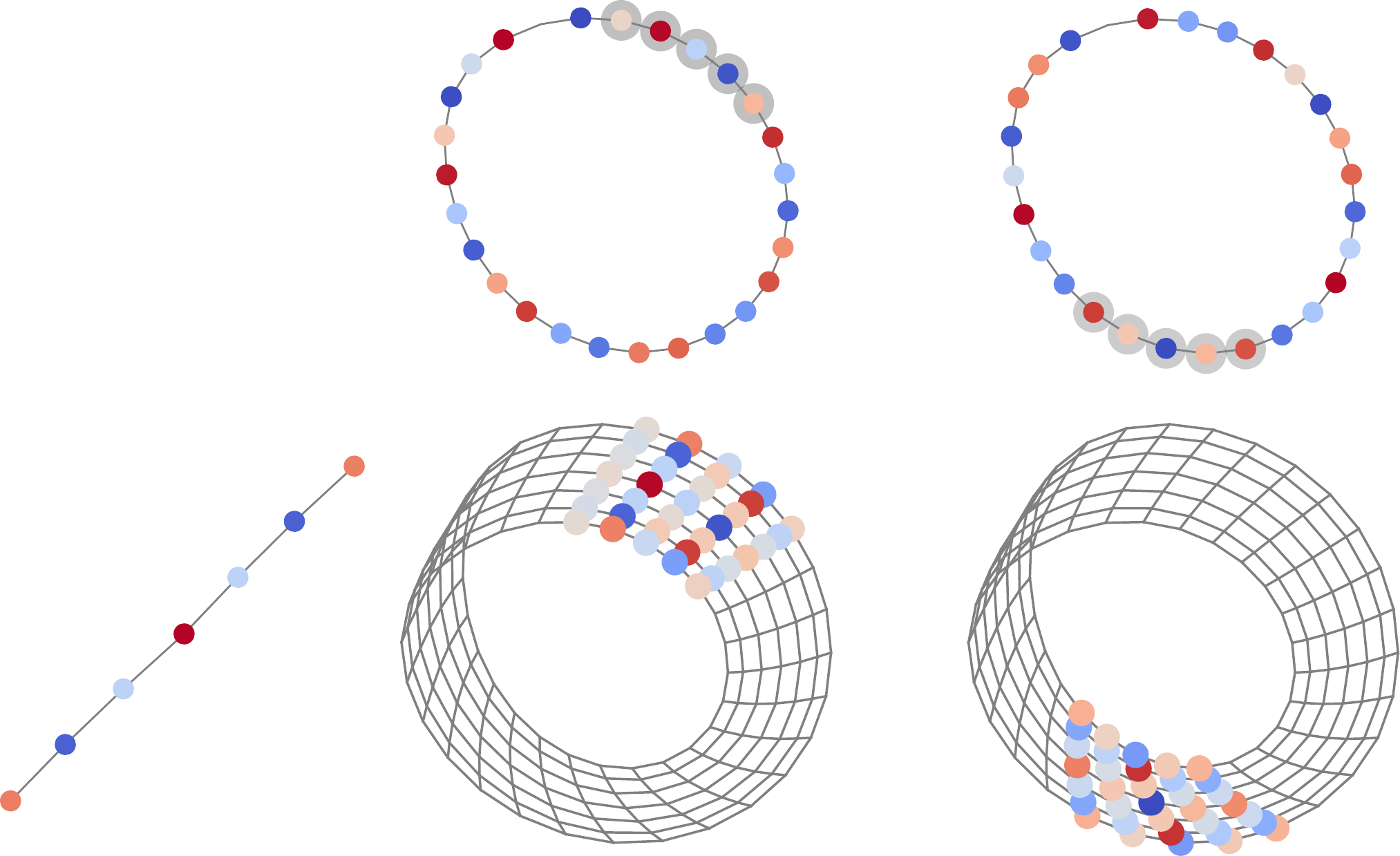}
  \caption{The bundle dictionary on the M\"{o}bius graph.
    A Fourier mode on the fiber is pictured on the left.
    Each column corresponds to a Fourier mode on the base (pictured as the colored signal on the top row) coupled with a covering set and corresponding partition function (highlighted nodes).
  }
  \label{fig:mobius2}
\end{figure}
\begin{figure*}
  \centering
  \resizebox{0.9\linewidth}{!}{\begin{tikzpicture}
  \begin{groupplot}[
    group style={group size=4 by 1,
      group name=myplots,
      horizontal sep=1.5cm,
    },
    every axis label/.append style={
      font=\Large,
    },
    title style={
      font=\Large,
    },
    xticklabel style={
      /pgf/number format/fixed,
      /pgf/number format/precision=2,
    },
    ]
    
    \nextgroupplot[
    colormap={WB}{color=(black) color=(white)},
    colorbar,
    view={0}{90},
    grid=none,
    xlabel={Stride},
    ylabel={Reach},
    xlabel style={sloped like x axis},
    title={Cumul. Coherence $\mu_1(\sqrt{n})$},
    xmin=0.5, xmax=6.5,
    ymin=0.5, ymax=5.5,
    point meta min=3,
    colorbar horizontal,
    ]
    \addplot3[surf,shader=flat corner] table[x expr=\thisrow{stride}-0.5, y expr=\thisrow{reach}-0.5, z=mu13] {figs/coherence.dat};

    \nextgroupplot[
    colormap={WB}{color=(black) color=(white)},
    colorbar,
    view={0}{90},
    grid=none,
    xlabel={Stride},
    ylabel={Reach},
    xlabel style={sloped like x axis},
    title={Stdev. of Atom Norms},
    xmin=0.5, xmax=6.5,
    ymin=0.5, ymax=5.5,
    colorbar horizontal,
    ]
    \addplot3[surf,shader=flat corner] table[x expr=\thisrow{stride}-0.5, y expr=\thisrow{reach}-0.5, z=std] {figs/coherence.dat};

    \nextgroupplot[
    hide axis,
    enlargelimits=false,
    title={Energy Landscape of $n$-Pentane},
    colorbar,
    colorbar horizontal,
    colorbar shift/.style={yshift=-1.285cm},
    ]
    \addplot graphics[xmin=0,xmax=1,ymin=0,ymax=1] {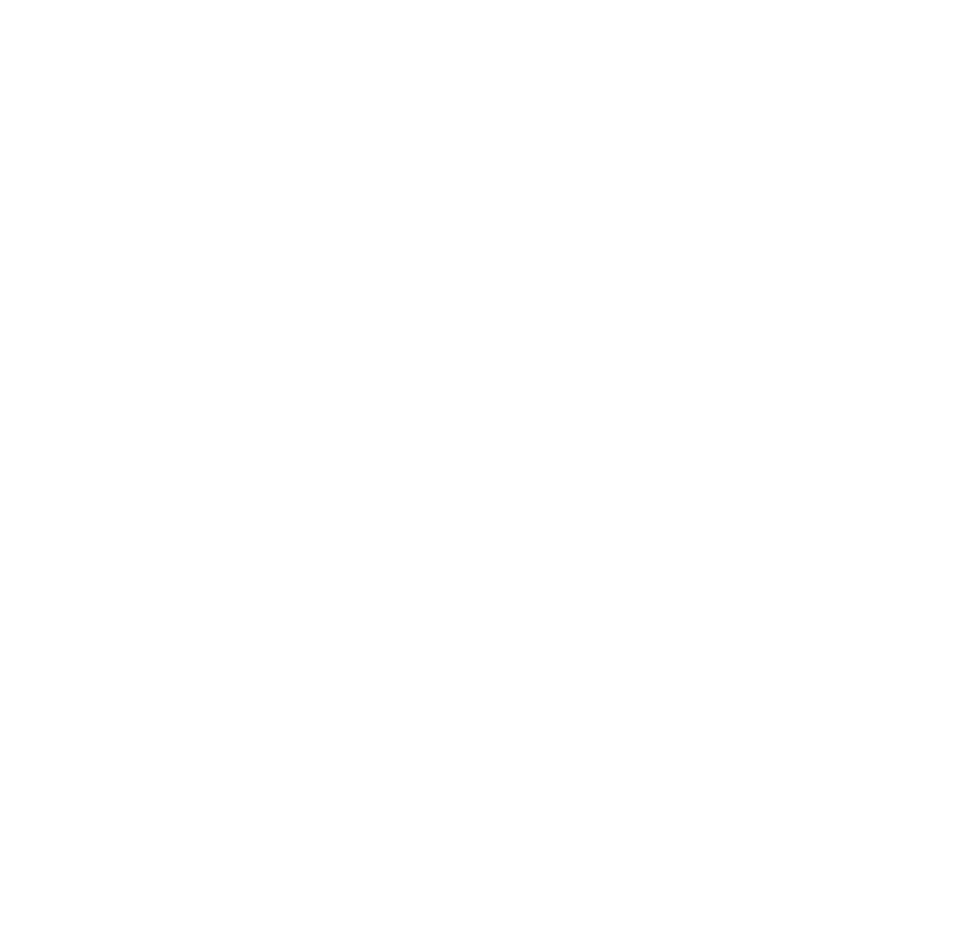};

    \coordinate (a0) at (axis cs:0.0,0.2);
    \coordinate (a1) at (axis cs:0.0,0.8);
    \coordinate (b1) at (axis cs:1.0,0.2);
    \coordinate (b0) at (axis cs:1.0,0.8);

    \draw [-latex,red,line width=6pt] (a0) -- (a1);
    \draw [-latex,red,line width=6pt] (b0) -- (b1);

    \nextgroupplot[
    title={Denoising Performance},
    xlabel={Stdev. of Noise},
    ylabel={Mean Squared Error},
    legend pos={north west},
    xmin=0, ymin=0,
    xmax=0.6,
    ]
    \addplot[name path=A,draw=none,forget plot] table[x=zstd, y=foulb] {figs/denoising.dat};
    \addplot[name path=B,draw=none,forget plot] table[x=zstd, y=fouub] {figs/denoising.dat};
    \addplot[color=blue!60!black,opacity=0.25,forget plot] fill between[of=A and B];

    \addplot[name path=E,draw=none,forget plot] table[x=zstd, y=bunlb] {figs/denoising.dat};
    \addplot[name path=F,draw=none,forget plot] table[x=zstd, y=bunub] {figs/denoising.dat};
    \addplot[black,opacity=0.25,forget plot] fill between[of=E and F];

    \addplot[color=blue!60!black, thick, dashed] table[x=zstd, y=fou] {figs/denoising.dat};
    \addlegendentry{Fourier}
    \addplot[color=black, very thick] table[x=zstd, y=bun] {figs/denoising.dat};
    \addlegendentry{Bundle}
    
  \end{groupplot}

  \foreach \plt/\lab in {c1r1/a,c2r1/b,c3r1/c,c4r1/d} {
    \node[anchor=north east,fill=white] at (myplots \plt.north east) {\Large{(\lab)}};
  }

\end{tikzpicture}}
  \caption{%
    Numerical evaluation of the bundle transform.
    \textbf{(a)} The cumulative coherence of the lifted Fourier bases as a function of the stride and reach of the cover of the base graph. The squares in black are those where the reach is too small for the given stride to yield a cover of the base graph.
    \textbf{(b)} The standard deviation of the atom norms in lifted Fourier bases as a function of the stride and reach of the cover of the base graph.
    \textbf{(c)} The simulated energy landscape of $n$-pentane.
    The identification of the left and right edges is indicated by the red arrows.
    \textbf{(d)} Denoising performance. Lines indicate mean squared error (MSE). Shaded regions indicate the standard deviation of the error over 50 trials.
  }
  \label{fig:exp}
\end{figure*}
We now define the \emph{bundle transform} for signals on graph bundles.
Let $\bundle{F}{G}{B}{\pi}$ be a graph bundle,
with $\set{D}_{\gra{B}}$ an orthonormal basis for $\mathbb{X}(\gra{B})$,
and $\set{D}_{\gra{F}}$ an orthonormal basis for $\mathbb{X}(\gra{F})$.

Let $\set{U}$ be a trivializing cover of $\gra{B}$: for instance, this can be the set of all stars of elements of $\Vert{\gra{B}}$.
Let $\{\rho_U:\in[0,1]^{\Vert{\gra{B}}}:U\in\set{U}\}$ be a partition of unity subordinate to $\set{U}$.
That is, the support of $\rho_U$ is contained in $U$ for each $U\in\set{U}$, and at any $v\in\Vert{\gra{B}}$, the sum $\sum_{U\in\set{U}}\rho_U(v)$ has unit value.

We now construct a set of elements of $\mathbb{X}(\gra{G})$, indexed by $\set{D}_{\gra{B}},\set{D}_{\gra{F}},\set{U}$.
For $\psi_{b}\in\set{D}_{\gra{B}},\psi_{f}\in\set{D}_{\gra{F}},U\in\set{U}$, define a signal $\psi_{b,f}^U\in\mathbb{X}(\gra{G})$ so that
\begin{equation}\label{eq:fiberlet}
  \psi_{b,f}^U = \pf{(\phi_U)}((\sqrt{\rho_U}\cdot\psi_{b}) \otimes \psi_{f}).
\end{equation}
In words, $\psi_{b,f}^U$ is the tensor product of $\psi_{b}$ and $\psi_{f}$ localized to the base set $U$ via the weight function $\rho_U$, then pushed forward to the total graph $\gra{G}$ via the local isomorphism $\phi_U$.
Let $\set{D}_{\gra{G}}$ be the set of all such signals.
As a trivial example, suppose $\set{D}_{\gra{B}}$ and $\set{D}_{\gra{F}}$ are the standard bases consisting of impulse functions, for their respective signal spaces, and the partition of unity is such that each function $\rho_U$ takes value $1$ on a single node and value $0$ elsewhere.
Then, the set $\set{D}_{\gra{G}}$ is merely the standard basis for $\mathbb{X}(\gra{G})$.
For a signal $\vect{x}\in\mathbb{X}(\gra{G})$, the bundle transform of $\vect{x}$ is the collection of inner products $\{\langle\vect{x},\psi\rangle:\psi\in\set{D}_{\gra{G}}\}$.
This construction allows for $\set{D}_{\gra{G}}$ to inherit the representational properties of $\set{D}_{\gra{B}}$ and $\set{D}_{\gra{F}}$, in the following sense:
\begin{theorem}\label{thm:frame}
  Under the assumption that $\set{D}_{\gra{B}}$ and $\set{D}_{\gra{F}}$ are orthonormal bases for the respective signal spaces $\mathbb{X}(\gra{B})$ and $\mathbb{X}(\gra{F})$, the dictionary of atoms $\set{D}_{\gra{G}}$ formed according to \eqref{eq:fiberlet} is a tight frame for $\mathbb{X}(\gra{G})$.
\end{theorem}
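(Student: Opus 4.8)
The plan is to compute the frame operator $S\colon\mathbb{X}(\gra{G})\to\mathbb{X}(\gra{G})$ given by $S\vect{x}=\sum_{\psi\in\set{D}_{\gra{G}}}\langle\vect{x},\psi\rangle\psi$ and to show that $S$ is the identity; this is equivalent to the claim that $\set{D}_{\gra{G}}$ is a tight frame (indeed a Parseval frame, with frame constant one), since then $\sum_{\psi\in\set{D}_{\gra{G}}}\lvert\langle\vect{x},\psi\rangle\rvert^{2}=\langle S\vect{x},\vect{x}\rangle=\lVert\vect{x}\rVert^{2}$ for all $\vect{x}$.

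First I would record the needed facts about the trivializing maps. Each $\phi_U\colon U\square\gra{F}\to\gra{G}$ is a local isomorphism with $p_1=\pi\circ\phi_U$, so on vertices it injects $\Vert{U}\times\Vert{\gra{F}}$ into $\pi^{-1}(\Vert{U})$; since $\pi^{-1}(v)\cong\gra{F}$ for every $v\in\Vert{\gra{B}}$, a counting argument fiber by fiber upgrades this to a bijection onto $\pi^{-1}(\Vert{U})$. Hence $\pf{(\phi_U)}\colon\mathbb{X}(U\square\gra{F})\to\mathbb{X}(\gra{G})$ is an isometric embedding whose image is exactly the space of signals supported on $\pi^{-1}(\Vert{U})$, and unwinding the definitions of pushforward and pullback shows that its Hilbert-space adjoint is the restriction map $\pb{(\phi_U)}\colon\vect{x}\mapsto\vect{x}\circ\phi_U$. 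I would also use the standard identification $\mathbb{X}(\gra{B}\square\gra{F})\cong\mathbb{X}(\gra{B})\otimes\mathbb{X}(\gra{F})$, under which $\set{D}_{\gra{B}}\otimes\set{D}_{\gra{F}}=\{\psi_b\otimes\psi_f\}$ is an orthonormal basis and $\mathbb{X}(U\square\gra{F})$ is the subspace supported on $\Vert{U}\times\Vert{\gra{F}}$; write $R_U$ for pointwise multiplication by $\sqrt{\rho_U}$ acting on the base factor, so that $R_U$ is self-adjoint, $R_U(\psi_b\otimes\psi_f)=(\sqrt{\rho_U}\cdot\psi_b)\otimes\psi_f$, and $\psi_{b,f}^U=\pf{(\phi_U)}\bigl(R_U(\psi_b\otimes\psi_f)\bigr)$ by \eqref{eq:fiberlet}.

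Next I would split $S$ over the cover. For a fixed $U$, the adjointness above together with self-adjointness of $R_U$ rewrites each coefficient as $\langle\vect{x},\psi_{b,f}^U\rangle=\langle R_U\pb{(\phi_U)}\vect{x},\,\psi_b\otimes\psi_f\rangle$, whence
\[
  \sum_{\psi_b\in\set{D}_{\gra{B}}}\ \sum_{\psi_f\in\set{D}_{\gra{F}}}\langle\vect{x},\psi_{b,f}^U\rangle\,\psi_{b,f}^U
  \;=\;\pf{(\phi_U)}\,R_U\Bigl(\ \sum_{\psi_b,\psi_f}\langle R_U\pb{(\phi_U)}\vect{x},\,\psi_b\otimes\psi_f\rangle\,\psi_b\otimes\psi_f\ \Bigr)
  \;=\;\pf{(\phi_U)}\,R_U^{2}\,\pb{(\phi_U)}\vect{x},
\]
where the inner sum collapses by Parseval's identity for the orthonormal basis $\set{D}_{\gra{B}}\otimes\set{D}_{\gra{F}}$ and $R_U^{2}$ is multiplication by $\rho_U$ on the base factor. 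Summing over $U\in\set{U}$ yields $S=\sum_{U\in\set{U}}\pf{(\phi_U)}R_U^{2}\pb{(\phi_U)}$. To finish, I would evaluate $S\vect{x}$ at a vertex $v\in\Vert{\gra{G}}$ and set $b=\pi(v)$: if $b\in\Vert{U}$ then $\phi_U^{-1}(v)=(b,f)$ for some $f\in\Vert{\gra{F}}$ and the $U$-term contributes $\rho_U(b)\,\vect{x}(\phi_U(b,f))=\rho_U(b)\,\vect{x}(v)$, while if $b\notin\Vert{U}$ then $v$ is outside the image of $\phi_U$ and $\rho_U(b)=0$, so that term vanishes; therefore $(S\vect{x})(v)=\bigl(\sum_{U\in\set{U}}\rho_U(b)\bigr)\vect{x}(v)=\vect{x}(v)$ by the partition-of-unity property $\sum_{U}\rho_U(b)=1$. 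Hence $S=I$ and $\set{D}_{\gra{G}}$ is a tight frame.

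The only real obstacle is bookkeeping: keeping straight which Hilbert space each atom and operator lives in, checking that $\pf{(\phi_U)}$ is an isometry onto signals supported on $\pi^{-1}(\Vert{U})$ with adjoint the restriction map, and recognizing that replacing $\psi_b$ by $\sqrt{\rho_U}\cdot\psi_b$ just conjugates the rank-one resolution of the identity by a self-adjoint multiplication operator. The genuinely geometric point --- that the charts $\{\phi_U\}$ overlap and could overcount --- is absorbed entirely into the final partition-of-unity cancellation, so there is no deeper difficulty once the formalism is set up.
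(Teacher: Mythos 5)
Your proof is correct and follows essentially the same route as the paper's: you move $\sqrt{\rho_U}$ across the inner product via the adjointness of the pushforward $\pf{(\phi_U)}$ and the pullback $\pb{(\phi_U)}$, collapse the sum over the orthonormal basis $\set{D}_{\gra{B}}\otimes\set{D}_{\gra{F}}$ by Parseval, and invoke the partition of unity to reassemble $\vect{x}$; the only cosmetic difference is that you show the frame operator equals the identity by pointwise evaluation, whereas the paper verifies the equivalent statement that the sum of squared coefficients equals $\|\vect{x}\|_2^2$. Your explicit counting argument that $\phi_U$ is a bijection onto $\pi^{-1}(\Vert{U})$, so that the pushforward is an isometry onto the signals supported there, is a detail the paper leaves implicit.
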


\begin{proof}
  Let $\vect{x}\in\mathbb{X}(\gra{G})$ be given arbitrarily.
  We consider the squared $\ell_2$-norm of the transform of $\vect{x}$ by the dictionary $\set{D}_{\gra{G}}$, which can be expressed in the following way:
  \begin{gather*}
    \sum_{\psi_b\in\set{D}_{\gra{B}}}
    \sum_{\psi_f\in\set{D}_{\gra{F}}}
    \sum_{U\in\set{U}}
    \big\langle
    \vect{x},\pf{(\phi_U)}((\sqrt{\rho_U}\cdot\psi_{b}) \otimes \psi_{f})
    \big\rangle^2 = \\
    \sum_{U\in\set{U}}
    \sum_{\psi_b\in\set{D}_{\gra{B}}}
    \sum_{\psi_f\in\set{D}_{\gra{F}}}
    \big\langle
    \pb{\pi}\sqrt{\rho_U}\cdot\vect{x},\pf{(\phi_U)}(\psi_{b} \otimes \psi_{f})
    \big\rangle^2.
  \end{gather*}
  Observe that the map sending $\vect{x}\mapsto\pb{\pi}\sqrt{\rho_U}\cdot\vect{x}$ is an isometry in the following sense:
  \begin{equation*}
    \sum_{U\in\set{U}}\|\pb{\pi}\sqrt{\rho_U}\cdot\vect{x}\|_2^2 = \|\vect{x}\|_2^2,
  \end{equation*}
  owing to the fact that $\{\rho_U\}_{U\in\set{U}}$ is a partition of unity.
  
  For each $U\in\set{U}$, define $\vect{x}_U := \pb{\pi}\sqrt{\rho_U}\cdot\vect{x}$.
  It is then sufficient to show that for each $U\in\set{U}$, the pushforward of the tensor product $\set{D}_{\gra{B}}\otimes\set{D}_{\gra{F}}$ preserves the norm of $\vect{x}_U$.
  For some such $U$ and $\vect{x}_U$, observe that the support of $\vect{x}_U$ is contained in the image of the local isomorphism $\phi_U:U\square\gra{F}\to\gra{G}$.
  Therefore, the pullback of the signal $\vect{x}_U\in\mathbb{X}(\gra{G})$ by $\phi_U$ to $\mathbb{X}(U\square\gra{F})$ is an isometry.
  Clearly, the tensor product of dictionaries $\set{D}_{\gra{B}}\otimes\set{D}_{\gra{F}}$ restricted to $U\square\gra{F}$ forms a tight frame for $\mathbb{X}(U\square\gra{F})$~\cite{Mallat1999}.
  Finally, we have identity of the inner products under the pullback, \ie{},
  \begin{equation*}
    \big\langle
    \vect{x}_U,\pf{(\phi_U)}(\psi_b\otimes\phi_f)
    \big\rangle
    =
    \big\langle
    \pb{(\phi_U)}\vect{x}_U,\psi_b\otimes\psi_f
    \big\rangle.
  \end{equation*}
  With this identity in hand, we can now complete the proof.
  \begin{gather*}
    \sum_{U\in\set{U}}
    \sum_{\psi_b\in\set{D}_{\gra{B}}}
    \sum_{\psi_f\in\set{D}_{\gra{F}}}
    \big\langle
    \pb{\pi}\sqrt{\rho_U}\cdot\vect{x},\pf{(\phi_U)}(\psi_{b} \otimes \psi_{f})
    \big\rangle^2 = \\
    \sum_{U\in\set{U}}
    \sum_{\psi_b\in\set{D}_{\gra{B}}}
    \sum_{\psi_f\in\set{D}_{\gra{F}}}
    \big\langle
    \pb{(\phi_U)}\vect{x}_U,\psi_{b} \otimes \psi_{f}
    \big\rangle^2 = \\
    \sum_{U\in\set{U}}\|\pb{(\phi_U)}\vect{x}_U\|_2^2 =
    \sum_{U\in\set{U}}\|\vect{x}_U\|_2^2 =
    \|\vect{x}\|_2^2,
  \end{gather*}
  as desired.
\end{proof}

The proof is essentially due to~\cite{Donoho1999}, originally for vector bundles over the Grassmanian manifold.
\Cref{thm:frame} indicates that the tightness of the bases for the base and fiber signal spaces are indeed preserved when pushed forward to the total graph.
However, this is at the cost of dispersing the energy of the base atoms across the cover $\set{U}$.
Note that \cref{thm:frame} recovers the known fact that the tensor product of the Fourier modes of the base and the fiber form a basis for the space of signals on their Cartesian product: to see this, let the cover $\set{U}$ of the base graph be the singleton set containing the entirety of $\gra{B}$, so that the corresponding partition of unity is just the constant function on the base graph.

The computation of the atoms $\psi_{b,f}^U$ is illustrated in \cref{fig:mobius2}.
Observe how the support of each atom is restricted to a covering set $U$, which lift to ``patches'' on the total graph via the local isomorphism $\phi_U$.
In the base space, the Fourier modes $\set{D}_{\gra{B}}$ are weighted according to the partition functions $\rho_U$.
This amounts to a windowed graph Fourier basis~\cite{Shuman2012,Shuman2016} on the base graph, which allows for compatibility with the fiber structure.

\section{Experiments}\label{sec:experiments}

\subsection{Locality of the base cover.}
As indicated by \cref{thm:frame}, any pair of orthonormal bases on the base and fiber signal spaces can be locally lifted to the total space via local isomorphisms weighted by a suitable partition of unity.
In this way, the bundle transform guarantees a tight frame for the signal space on the total graph (as opposed to an orthonormal basis, in general, since the frame will be redundant by a factor of the cardinality of the cover $\set{U}$).
We consider here how changing the design of the partition of the unity affects properties of the resulting frame.
Specifically, we evaluate how the cumulative coherence~\cite{Tropp2004} and the standard deviation of the norms of the atoms in $\set{D}_{\gra{G}}$ vary with respect to how far apart the covering sets are situated, and how much of the graph each one covers.

Let the base graph $\gra{B}$ be given by a cycle graph on $27$ nodes, with an extra edge connecting the first and thirteenth nodes in the cycle: this yields a graph consisting of two loops that share an edge.
For a fiber $\gra{F}$ given by a path graph of length $7$, construct a total graph $\gra{G}$ such that the fiber ``twists'' around one of the loops, and doesn't twist around the other.
This yields a cylinder and a M\"{o}bius band that are glued together along the product of a single edge in the base and the fiber.

We construct a family of covers of the base $\gra{B}$ determined by two parameters.
The first is the \emph{stride}, which determines how far apart the centers of the covering sets are located along the base.
The second is the \emph{reach}, which determines the radius of the neighborhood about each center to form the covering sets.
For a cover of the base graph determined by a given stride and reach, we construct a simple partition of unity where each partition function at a particular node in the corresponding covering set is equal to the inverse of the number of sets that cover that node.
As shown in \cref{fig:exp}~(a), the cumulative coherence evaluated at a sparsity level of $\sqrt{n}$, where $n=189$ is the number of nodes in the total graph, the cumulative coherence decreases as the stride increases, with very little influence of the reach.
We see furthermore in \cref{fig:exp}~(b) that standard deviation of the $\ell_2$-norm of the atoms in $\set{D}_{\gra{G}}$ increases with the stride, but decreases with the reach.
A low standard deviation is desirable here, as it indicates the energy is dispersed evenly among the representation of the signal space by the dictionary.
In this case, a design choice needs to be made in the choice of partition function in order to trade off between coherence and variance in the atom norms.

\subsection{Conformation space of pentane.}
In stereochemistry, the energy landscape of molecules is studied with respect to their possible shapes, or \emph{conformations}~\cite{Haile1992}.
One such space of conformations is that of the $n$-pentane molecule, which is a chain of five singly-bonded carbon atoms.
Under a rigid bond model, the bond lengths and bond angles between each consecutive carbon atom is assumed to be fixed,
leaving the only degrees of freedom for the conformation of the molecule to be the torsional angle about the two central carbon-carbon bonds, which we denote by $(\theta_1,\theta_2)$.
By symmetry of the molecule, we can make the identification $(\theta_1,\theta_2)\sim(\theta_2,\theta_1)$.
Then, the conformation space of $n$-pentane can be modeled by the product of two circles modulo this equivalence relation.
The product of two circles yields a torus, and dividing by the equivalence relation yields a M\"{o}bius band.
Each conformation has an associated energy function, determined by the proximity of hydrogen atoms to one another, yielding a function from the M\"{o}bius band to the real numbers.
The stereoisomers of $n$-pentane, then, are those whose conformations are at local minima of this energy function.

We approximate the conformation space by a graph bundle, whose fiber is a path graph of length $6$, and whose base is a cycle graph of length $15$.
The energy landscape is computed using the \texttt{RDKit}~\cite{rdkit} implementation of the universal force field~\cite{Casewit1992}, shown in \cref{fig:exp}~(c).
We consider a denoising task, where the universal frame thresholding method~\cite{Cosentino2020} is applied using a both the Fourier basis of the total graph, and the bundle transform determined by the Fourier modes on the base and fiber graphs, with covering sets of stride $3$ and reach $2$.
The mean squared error of the denoised energy landscape under additive white Gaussian noise using universal frame thresholding is plotted in~\cref{fig:exp}~(d).
Observe that in addition to having the interpretability of locally factoring into the product of the fiber and base coordinates, the lift of the Fourier bases via the bundle transform yields superior performance in denoising over the standard Fourier basis on the total graph.

\section{Conclusion}

Incorporating knowledge of the geometry underlying problems in signal processing and machine learning has been of great interest lately.
A key aspect of this is to represent data in appropriate coordinate systems, allowing for sound interpretation and design of processing methods.
We have considered how ideas from signal processing on products of graphs can be adapted to graph bundles, in which local signal representations are used as ``coordinates'' for structures that only factor as product graphs in a local sense.
This approach preserves tightness of the factor bases, and yields interpretable signal representations, as demonstrated on real and synthetic data.

\newpage

\bibliographystyle{IEEEbib}
\bibliography{ref}

\begin{thebibliography}{10}

\bibitem{Ortiz2018}
Guillermo Ortiz-Jim{\'e}nez, Mario Coutino, Sundeep~Prabhakar Chepuri, and
  Geert Leus,
\newblock ``Sampling and reconstruction of signals on product graphs,''
\newblock in {\em IEEE Global Conference on Signal and Information Processing}.
  IEEE, 2018, pp. 713--717.

\bibitem{Varma2018}
Rohan~A Varma and Jelena Kovacevic,
\newblock ``Sampling theory for graph signals on product graphs,''
\newblock in {\em IEEE Global Conference on Signal and Information Processing}.
  IEEE, 2018, pp. 768--772.

\bibitem{Stanley2020}
Jay~S Stanley, Eric~C Chi, and Gal Mishne,
\newblock ``Multiway graph signal processing on tensors: Integrative analysis
  of irregular geometries,''
\newblock {\em IEEE Signal Processing Magazine}, vol. 37, no. 6, pp. 160--173,
  2020.

\bibitem{Godsil2001}
Chris Godsil and Gordon~F Royle,
\newblock {\em Algebraic graph theory}, vol. 207,
\newblock Springer Science \& Business Media, 2001.

\bibitem{Hatcher2005}
Allen Hatcher,
\newblock {\em Algebraic Topology},
\newblock 2005.

\bibitem{Shuman2013}
David~I Shuman, Sunil~K Narang, Pascal Frossard, Antonio Ortega, and Pierre
  Vandergheynst,
\newblock ``The emerging field of signal processing on graphs: Extending
  high-dimensional data analysis to networks and other irregular domains,''
\newblock {\em IEEE Signal Processing Magazine}, vol. 30, no. 3, pp. 83--98,
  Apr. 2013.

\bibitem{Merris1994}
Russell Merris,
\newblock ``Laplacian matrices of graphs: a survey,''
\newblock {\em Linear algebra and its applications}, vol. 197, pp. 143--176,
  1994.

\bibitem{Roddenberry2022b}
T~Mitchell Roddenberry, Fernando Gama, Richard~G Baraniuk, and Santiago
  Segarra,
\newblock ``On local distributions in graph signal processing,''
\newblock {\em IEEE Transactions on Signal Processing}, vol. 70, pp.
  5564--5577, 2022.

\bibitem{Rey2022}
Samuel Rey, T~Mitchell Roddenberry, Santiago Segarra, and Antonio~G Marques,
\newblock ``Enhanced graph-learning schemes driven by similar distributions of
  motifs,''
\newblock {\em arXiv preprint 2207.04747}, 2022.

\bibitem{Mallat1999}
St{\'e}phane Mallat,
\newblock {\em A Wavelet Tour of Signal Processing},
\newblock Elsevier, 1999.

\bibitem{Donoho1999}
David~L Donoho,
\newblock ``Tight frames of $k$-plane ridgelets and the problem of representing
  objects that are smooth away from $d$-dimensional singularities in
  $\mathbb{R}^n$,''
\newblock {\em Proceedings of the National Academy of Sciences}, vol. 96, no.
  5, pp. 1828--1833, 1999.

\bibitem{Shuman2012}
David~I Shuman, Benjamin Ricaud, and Pierre Vandergheynst,
\newblock ``A windowed graph fourier transform,''
\newblock in {\em IEEE Statistical Signal Processing Workshop}. IEEE, 2012, pp.
  133--136.

\bibitem{Shuman2016}
David~I Shuman, Benjamin Ricaud, and Pierre Vandergheynst,
\newblock ``Vertex-frequency analysis on graphs,''
\newblock {\em Applied and Computational Harmonic Analysis}, vol. 40, no. 2,
  pp. 260--291, 2016.

\bibitem{Tropp2004}
Joel~A Tropp,
\newblock ``Greed is good: {Algorithmic} results for sparse approximation,''
\newblock {\em IEEE Transactions on Information Theory}, vol. 50, no. 10, pp.
  2231--2242, 2004.

\bibitem{Haile1992}
James~M Haile,
\newblock {\em Molecular dynamics simulation: elementary methods},
\newblock John Wiley \& Sons, Inc., 1992.

\bibitem{rdkit}
{RDKit Developers},
\newblock ``{RDKit: Open-source chemoinformatics},'' 2022.

\bibitem{Casewit1992}
CJ~Casewit, KS~Colwell, and AK~Rappe,
\newblock ``Application of a universal force field to organic molecules,''
\newblock {\em Journal of the American Chemical Society}, vol. 114, no. 25, pp.
  10035--10046, 1992.

\bibitem{Cosentino2020}
Romain Cosentino, Randall Balestriero, Richard~G Baraniuk, and Behnaam Aazhang,
\newblock ``Universal frame thresholding,''
\newblock {\em IEEE Signal Processing Letters}, vol. 27, pp. 1115--1119, 2020.

\end{thebibliography}

\end{document}